\def\BibTeX{{\rm B\kern-.05em{\sc i\kern-.025em b}\kern-.08em
    T\kern-.1667em\lower.7ex\hbox{E}\kern-.125emX}}
\newcommand{\toronto}{ibmq\_toronto~}
\newtheorem{theorem}{Theorem}[section]
\newtheorem{lemma}[theorem]{Lemma}
\begin{document}
\title{Characterizing the Reproducibility of Noisy Quantum Circuits}

\author{
\IEEEauthorblockN{Samudra~Dasgupta $^{1, 2, \dagger}$ and Travis S.~Humble$^{1, 2, \ddagger}$}
\IEEEauthorblockA{\textit{$^{1}$ Quantum Science Center, Oak Ridge National Laboratory, Oak Ridge, TN 37830, USA}\\
\textit{$^{2}$ Bredesen Center, University of Tennessee, Knoxville, TN 37996, USA}\\
}}

\maketitle

\begin{abstract}
The ability of a quantum computer to reproduce or replicate the results of a quantum circuit is a key concern for verifying and validating applications of quantum computing. Statistical variations in circuit outcomes that arise from ill-characterized fluctuations in device noise may lead to computational errors and irreproducible results. While device characterization offers a direct assessment of noise, an outstanding concern is how such metrics bound the reproducibility of a given quantum circuit. Here, we first directly assess the reproducibility of a noisy quantum circuit, in terms of the Hellinger distance between the computational results, and then we show that device characterization offers an analytic bound on the observed variability. We validate the method using an ensemble of single qubit test circuits, executed on a superconducting transmon processor with well-characterized readout and gate error rates. The resulting description for circuit reproducibility, in terms of a composite device parameter, is confirmed to define an upper bound on the observed Hellinger distance, across the variable test circuits. This predictive correlation between circuit outcomes and device characterization offers an efficient method for assessing the reproducibility of noisy quantum circuits.
\end{abstract}

\begin{IEEEkeywords}
Reproducibility characterization, Hellinger distance, Quantum computing
\end{IEEEkeywords}

\section{Introduction}
Quantum computing leverages phenomena such as superposition and entanglement to offer novel capabilities relative to conventional computing, such as super-polynomial reductions, in time to solution and energy consumption, as well as memory storage \cite{humble2019quantum}. Quantum advantages may be found for solving a variety of problems, such as simulating quantum many-body systems \cite{browaeys2020many}, 
solving unstructured optimization problems \cite{montanaro2016quantum}, 
achieving complex linear algebra computations \cite{harrow2009quantum}, efficiently sampling high-dimensional probability distributions \cite{shang2015monte}, and enhancing the security of communication networks \cite{espitia2021role}. 
These assessments are based on an idealized quantum computing model \cite{mermin2007quantum, nielsen2002quantum, rieffel2011quantum}, comprising of an $n$ qubit register that encodes a $2^{n}$-dimensional Hilbert space $\mathbb{C}^{2\otimes n}$. 
After initializing the register to the state $\ket{\psi} = \ket{0}^{\otimes n}$, a sequence of logical unitary transformations $\mathcal{U}$, known as quantum gates, transform the state as $\ket{\psi} \rightarrow \mathcal{U}_{k} \cdots \mathcal{U}_1 \mathcal{U}_0\ket{\psi}$. 
Ultimately, a measurement operation reads out an $n$-bit string \textit{s}, where $s \in \{0, 1\}^{\otimes n}$ with $\textrm{Pr}(s) = |\braket{s|\psi}|^2$ has the probability to observe the outcome. 
\par 
In practice, on-going efforts to realize a quantum computer introduces many additional physical processes that complicate the operational description above. For example, a crude approximation to an actual device can be conceptualized as a stack of interacting layers \cite{8123664,8638598,ball2021quantum} that includes physical qubits, a physical control layer, a hardware-aware compiler, a logical control layer (including fault tolerant quantum error correction protocols), and a logical compiler and circuit optimizer, as well as the quantum algorithms and applications. Each of these intermediate processes introduce the possibility for noise and errors that make modeling the system more complicated. Subsequent certification that an actual physical system is performing as expected represents a leading concern for validating experimental results \cite{kliesch2021theory}. This is made more difficult by the inherent randomness that often manifests in the computed results, inability to pin-point exactly where in a circuit an error has occurred, curse of dimensionality, and the inability to step through program execution \cite{carrasco2021theoretical, arute2019, ferracin2019accrediting}.
\par
While quantum computers using fault-tolerant operations may eventually alleviate some of these concerns  \cite{gottesman1998theory}, current noisy intermediate-scale quantum (NISQ \cite{preskill2019quantum}) computers are influenced heavily by noise and errors in physical operations \cite{divincenzo2000physical}. Computational errors arise due to noise in the register, imperfect gate implementations, and faulty thermodynamic control systems \cite{blume2020modeling, blume2010optimal, temme2017error, kandala2019error, wilson2020just}. 
For example, noise in the register may arise from spontaneous decay, decoherence, coupling to the environment, cross-talk, and leakage. Errors in gate operations also arise when one of the control operations goes awry, due to lack of precision in the applied pulses, e.g., pulse attenuation and distortion, pulse jitter, and frequency drift. Thermodynamic control systems that maintain the stability of the operating environment may also be a source of noise, due to limits on the cooling power of the dilution refrigerators, contaminants in the vacuum chamber, or fluctuations in the electromagnetic shielding and vibration suppression systems.
\par
The varying environment and fluctuating controls present in current NISQ computing platforms lead to transient sources of noise that impact the ability to reproduce the results of a quantum circuit  \cite{dasgupta2021stability,zhang2021predicting}. While errors arising from fixed sources of device noise can frequently be mitigated using tailored methods \cite{maciejewski2020mitigation,hamilton2020error,bravyi2021mitigating,hamilton2020scalable}, ill-characterized and transient noise impedes the reproduction of NISQ circuit results and prevents the replication of quantum computed solutions. It is, therefore, important to assess the conditions under which a noisy quantum circuit may be expected to be reproducible, as well as the correlation with the corresponding noise in the quantum device. Bounding statistical variations, expected from a noisy quantum circuit, in terms of device noise itself, is potentially an efficient method for assessing reproducibility on NISQ platforms.
\par 
Here, we show how device characterization may be used to bound the reproducibility of a noisy quantum circuit. 
First, we formalize the problem of quantifying the reproducibility of results from a quantum circuit ($\mathcal{C}$), executed on a noisy quantum device ($D$). 
We express this statistical variation, in terms of the Hellinger distance between the observed noisy output and expected ideal distribution, and we examine the conditions under which this distance deviates.
Then, we derive an upper bound on the Hellinger distance by setting a threshold on these deviations that may be expressed in terms of the device noise itself. This yields a test as to whether the quantum circuit ($\mathcal{C}$) is reproducible when executed on the given device ($D$). We show that this test is efficient when cast in terms of a single composite parameters computed by estimating the device noise. We then validate this method using an ensemble of single qubit test circuits executed on a well-characterized, but noisy, superconducting transmon processor. \color{black} 
\par
The remainder of the presentation is as follows: in Section 2, 
we show how the composite parameter ($\gamma_D(\mathcal{C})$), representing a quantum circuit ($\mathcal{C}$), executed on a device ($D$), is composed from available device characterization, and that, when $\gamma_D(\mathcal{C})$ stays below the threshold $\gamma_{\textrm{max}}(\mathcal{C})$, then the circuit is reproducible. In Section 3, 
we present results from applying these methods to single qubit circuits, executed on a superconducting transmon processor, to validate agreement between the theory and experiment. Finally, in Section 4, 
we discuss the success of this approach, as well as its potential limitations.
\par
This work is a significant extension of an earlier publication \cite{dasgupta2021reproducibility}, in which the methods and results have been revised and refined to validate experimentally-observed bounds.

\section{Method}\label{sec:method}
Consider $\rho_{0}$ to denote the initial density matrix representing the state of the $n$ qubit quantum register. In the ideal setting, the quantum state of the register evolves under a unitary transformation $\mathds{U}$ that describes the circuit $\mathcal{C}$ as: 
\begin{equation}
\rho = \mathds{U}_{\mathcal{C}} \rho_{0} \mathds{U}_{\mathcal{C}}^\dagger.
\end{equation}
Let $\{\ket{i}\}_{i=0}^{N-1}$ represent the orthonormal computational basis, with $N=2^n$, and let $\{\Pi_i = \ket{i}\bra{i}\}_{i=0}^{N-1}$ be the corresponding set of orthonormal projectors. The corresponding probability distribution for the results generated by an ideal quantum device ($D^{ideal}$) is then given as:
\begin{equation}
\mathds{P}^{ideal} = \{p_i^{ideal}\} \textrm{ for } i \in \{0, 1, \cdots, N-1\}
\end{equation}
where $p_i^{ideal} = \textrm{Tr}[\Pi_i \rho]$ is the probability to observe the $i$-th outcome. {In general, it is not efficient to construct the set $\mathds{P}^{ideal}$ using conventional methods, i.e., classical computing, and the resources for such calculations scale exponentially in the size of $n$. However, such demanding calculations are feasible for values of $n < 50$, while calculations of highly structured circuits, such as the quantum Fourier transform, quantum search, or arithmetic circuits, may be simulated efficiently. We limit our current approach to the consideration of $\mathcal{C}$ for such instances.
}
\par
In the presence of noise, the evolution of the quantum register is no longer modeled by a unitary operator, and the final state is generally a mixed-state. Let $\mathcal{E}$ be the super-operator representing a noisy quantum channel, in terms of Kraus operators ($M_k$) that represent the execution of the circuit $\mathcal{C}$. The corresponding state of the register is then: 
\begin{equation}
\rho' = \mathcal{E}(\rho)= \sum\limits_{k} M_k \rho_{0} M_k^\dagger
\end{equation}
where the Kraus operators are a function of the device reliability parameters. Thus, the corresponding probability distribution from executing $\mathcal{C}$ on the noisy quantum device $D^{noisy}$ is given by:
\begin{equation}
\mathds{P}^{noisy} = \{p_i^{noisy}\} \textrm{ for } i \in \{0, 1, \cdots, N-1\}
\end{equation}
where $p_i^{noisy} = \textrm{Tr}[M_i^\dagger M_i \rho']$, and $M_i$ is the measurement operator for a noisy readout channel \cite{smith2021qubit}. 
\par
We quantify the variation between the ideal and actual distributions for the results from executing the circuit $\mathcal{C}$ using the Hellinger distance. We denote the Hellinger distance between $\mathds{P}^{ideal}$ and $\mathds{P}^{noisy}$ as:
\begin{equation}
d(\mathds{P}^{\textrm{ideal}}, \mathds{P}^{\textrm{noisy}}) = 
\sqrt{1-BC(\mathds{P}^{\textrm{ideal}},\mathds{P}^{\textrm{noisy}})}
\label{eq:hellinger}
\end{equation}
with the Bhattacharyya coefficient $BC(\mathds{P}^{\textrm{ideal}}, \mathds{P}^{\textrm{noisy}}) \in [0,1]$ defined as:
\begin{equation}
BC(\mathds{P}^{\textrm{ideal}}, \mathds{P}^{\textrm{noisy}}) = 
\sum\limits_{i=0}^{N-1} \sqrt{p_i^{\textrm{ideal}} p_i^{\textrm{noisy}}}.
\end{equation}
Note that the Hellinger distance vanishes for identical distributions and approaches unity for those distributions with no overlap.
Now, let $\delta \in [0,1]$ be a parameter that defines a threshold for the Hellinger distance, i.e., when:
\begin{equation}
d(\mathds{P}^{\textrm{ideal}}, \mathds{P}^{\textrm{noisy}}) \leq \delta
\label{eq:repr}
\end{equation}
then, the results from a quantum circuit ($\mathcal{C}$) on a noisy, quantum device ($D$) are reproducible within a distance ($\delta$) of the idealized outcomes. 
The equation for characterizing the reproducibility of a noisy quantum circuit introduces a test for deciding if the observed value of the Hellinger distance lies below the elected threshold. 
\par 
In general, the above test for reproducibility requires estimating the noisy probability distribution ($\mathds{P}^{\textrm{noisy}}$) by repeatedly sampling the circuit outcomes. {The number of samples required to estimate an arbitrary distribution over $N=2^n$ outcomes with precision ($\epsilon$) scales exponentially with the Hilbert space dimension ($n$) and as an inverse square of the required precision ($\epsilon$) (see Appendix~\ref{sec:appendix_precision}).}
Such a direct method for estimating the Hellinger distance is inefficient and requires repetition across devices. We next consider how to correlate the bound placed on the Hellinger distance in \mbox{Equation (\ref{eq:repr})}, with the noise assumptions for the given device. We show, by example, that a composite parameter ($\gamma_{D}(\mathcal{C})$), characterizing the noisy quantum circuit, can be similarly bound from above as: 
\begin{equation}
\gamma_{D}(\mathcal{C}) \leq \gamma_{\textrm{max}}(\mathcal{C}, \delta)
\label{eq:gamma_tau}.
\end{equation}
\subsection{Example}
Consider an $n$ qubit state prepared as a uniform superposition across the $2^n$ computational basis states as:
\begin{equation}
\ket{\psi} = \frac{1}{2^{n/2}}\sum\limits_{s \in \{0,1\}^n}\ket{s}.
\end{equation}
The corresponding quantum circuit shown in \mbox{Figure \ref{fig:27_hadamards_with_n_1}} corresponds to:
\begin{equation}
\ket{\psi} = \mathds{H}^{\otimes n} \ket{0}^{\otimes n}
\end{equation}
for which the ideal distribution is:
\begin{equation}
p_i^{\textrm{ideal}} = \frac{1}{N} = 2^{-n}.
\end{equation}
We next assume that the quantum register is reliably initialized as $\ket{0}^{\otimes n}$ and that inter-qubit cross-talk is negligible during noisy circuit execution. Rather, gate error and readout fidelity capture the principal sources of noise in this circuit, each of which is modeled as a noisy process.
\begin{figure}[H]
\center
\includegraphics[width=8cm]{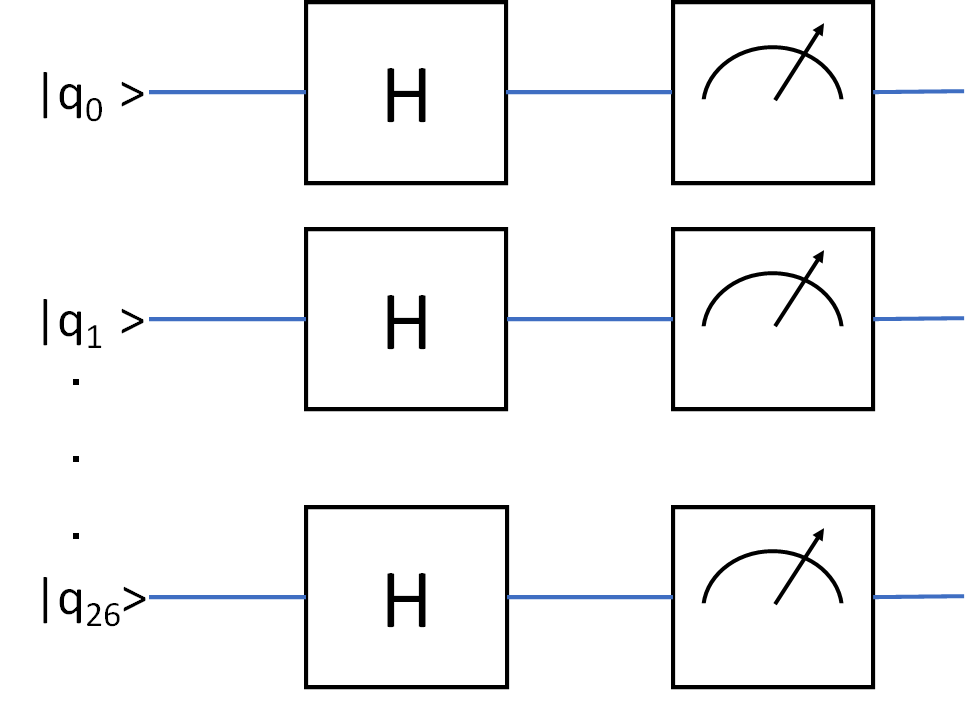} 
\caption{Circuit used for our experiment. In this figure, $H$ represents the Hadamard gate. The meter symbol denotes measurement gate.}
\label{fig:27_hadamards_with_n_1}
\end{figure}
\par 
Let $\mathds{I}, \mathds{X}, \mathds{Y},$ and  $\mathds{Z}$ denote the $2\times 2$ identity matrix, Pauli-$X$ matrix, Pauli-$Y$ matrix, and the Pauli-$Z$ matrix, respectively, in the computational basis as:
\begin{equation}
\begin{split}
\mathds{I}=&\begin{pmatrix}
1& 0\\
0& 1\\
\end{pmatrix}\\
\mathds{X}=&\begin{pmatrix}
0& 1\\
1& 0\\
\end{pmatrix}\\
\mathds{Y}=&\begin{pmatrix}
0& -i\\
i& 0\\
\end{pmatrix}\\
\mathds{Z}=&\begin{pmatrix}
1&  0\\
0& -1\\
\end{pmatrix}\\
\end{split}
\end{equation}
Additionally, let $R_Y(\alpha)$ denote the rotation by an angle ($\alpha$) about the Y-axis on the Bloch sphere:
\begin{equation}
\begin{split}
R_Y(\alpha) =& e^{-i\frac{\alpha}{2}Y}\\
=&\cos \frac{\alpha}{2} \mathds{I} - i \sin \frac{\alpha}{2} \mathds{Y}\\
=&\begin{pmatrix}
\cos\frac{\alpha}{2}  & -\sin\frac{\alpha}{2} \\
\sin\frac{\alpha}{2} & \cos\frac{\alpha}{2} \\
\end{pmatrix}\\
\end{split}.
\end{equation}
An ideal Hadamard gate is then given by:
\begin{equation}
H = R_Y\left( \frac{\pi}{2} \right) \mathds{Z} =
\begin{pmatrix}
\cos\frac{\pi}{4} &  \sin\frac{\pi}{4} \\
\sin\frac{\pi}{4} & -\cos\frac{\pi}{4} \\
\end{pmatrix}
= \frac{1}{\sqrt{2}}
\begin{pmatrix}
1&1\\
1&-1\\
\end{pmatrix}.
\end{equation}
We model a noisy Hadamard gate ($\tilde{\mathds{H}}$) by the unitary: 
\begin{equation}
\tilde{\mathds{H}} = 
\begin{pmatrix}
\cos\left(\frac{\pi}{4}+\theta\right) &  \sin\left(\frac{\pi}{4}+\theta\right) \\
\sin\left(\frac{\pi}{4}+\theta\right) & -\cos\left(\frac{\pi}{4}+\theta\right) \\
\end{pmatrix}
\end{equation}
\color{black}
where $\theta$ is the implementation error that is assumed to be small, i.e., $\theta \ll 1$.
\par
The operator representation for a unitary control error has only one term which can be seen as follows. 
Write a noisy unitary $\tilde{\mathcal{U}}$ as 
\begin{equation}
    \tilde{\mathcal{U}} = \tilde{\mathcal{U}} \mathcal{U}^\dagger \mathcal{U}
\end{equation}
where $\mathcal{U}$ is the ideal unitary. Thus, 
\begin{equation}
\begin{split}
\rho^\prime =& \tilde{\mathcal{U}} \rho  \tilde{\mathcal{U}}^\dagger \\
=& (\tilde{\mathcal{U}} \mathcal{U}^\dagger)  (\mathcal{U} \rho \mathcal{U}^\dagger) (\mathcal{U}  \tilde{\mathcal{U}}^\dagger) \\
=& E (\mathcal{U} \rho \mathcal{U}^\dagger)  E^\dagger \\
\end{split}
\end{equation}
where $E$ is the operator representing the unitary control error that arises due to imperfections in the control system. For our example,
\begin{equation}
    E = \tilde{\mathds{H}}\mathds{H}^\dagger = 
\begin{pmatrix}
\cos\theta &  -\sin\theta \\
\sin\theta & \cos\theta \\
\end{pmatrix}
\end{equation}
which is the 2D rotation matrix (no error will correspond to an identity channel with $E=\mathds{I}$).
\par 
Thus,
\begin{equation}
\tilde{\mathds{H}} =\frac{1}{\sqrt{2}}
\begin{pmatrix}
\cos\theta - \sin\theta &  \cos\theta + \sin\theta \\
\cos\theta + \sin\theta & -\cos\theta + \sin\theta \\
\end{pmatrix}
\end{equation}
\begin{equation}
\tilde{\mathds{H}}\ket{0} = 
\frac{1}{\sqrt{2}}(\cos\theta - \sin\theta)\ket{0} + 
\frac{1}{\sqrt{2}}(\cos\theta + \sin\theta)\ket{1}
\end{equation}
In the absence of readout noise, when we initialize a qubit in the ground state, subject it to a noisy Hadamard gate, and measure in the $\mathds{Z}$-basis, we get the probabilities for observing the output as:
\begin{equation}
\textrm{Pr}(0) = \frac{1}{2}(1-\sin2\theta)
\end{equation}
\begin{equation}
\textrm{Pr}(1) = \frac{1}{2}(1+\sin2\theta)
\end{equation}
\par
We next consider what happens when the Hadamard gate is followed by a noisy measurement. The readout channel is characterized as a quantum channel,  \cite{oszmaniec2019simulating, bravyi2021mitigating, geller2020rigorous} as shown in \mbox{Figure  \ref{fig:f0f1}}, using two parameters for each qubit. 

\begin{figure}[H]
\centering
\includegraphics[width=8cm]{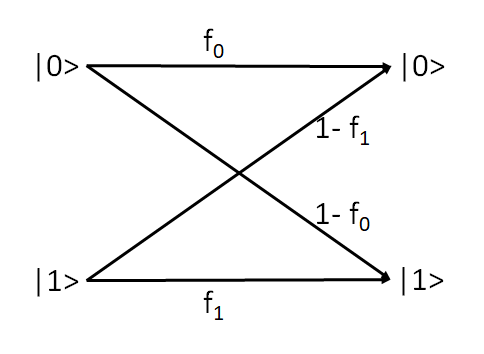}
\caption{The quantum channel maps $\ket{0}$ and $\ket{1}$ to their expected outcomes, with probabilities $f_0$ and $f_1$, respectively.}
\label{fig:f0f1}
\end{figure}

The first parameter ($f_0$) defines the probability of observing $0$ post readout when the channel input state is $\ket{0}$, and the second ($f_1$) defines the probability of observing $1$ post readout when the channel input state is $\ket{1}$.

{A classical representation for the single qubit readout channel is:
\begin{equation}
\mathds{P}_{\textrm{obs}} = \Lambda \mathds{P}_{\textrm{true}}    
\end{equation}
where $\Lambda$ is the readout error matrix with elements $\Lambda_{ij}$ = probability of observing $\ket{i}$ when the input to channel is $\ket{j} (i, j \in \{0, 1\})$:
\begin{equation}
\Lambda =  
\begin{pmatrix}
f_0 &  1-f_1 \\
f_1 & 1-f_0 \\
\end{pmatrix}
\label{eq:Lambda}    
\end{equation}
Equivalently, the quantum channel representation for a single qubit noisy measurement has two terms, i.e., $M_0$ and $M_1$, and is specified by a super-operator ($\mathcal{E}$), whose action on the quantum state is as follows \cite{smith2021qubit}:
\begin{equation}
\begin{split}
\mathcal{E}(\rho) =& M_0 \rho M_0^\dagger + M_1 \rho M_1^\dagger\\
M_0 =& \sqrt{f_0}\ket{0}\bra{0}+\sqrt{1-f_1}\ket{1}\bra{1}\\
M_1 =& \sqrt{1-f_0}\ket{0}\bra{0}+\sqrt{f_1}\ket{1}\bra{1}\\
\end{split}
\end{equation}
which is equivalent to \mbox{Equation (\ref{eq:Lambda})}, when you consider the action of the measurement operators ($M_i$), given by $Pr(i)_{\textrm{obs}}= \textrm{Tr}\{M_i^\dagger M_i \rho\}$. Thus, the representation equivalently shows how a noisy measurement of a quantum state $\rho$ yields a particular readout $i$. 
}


Let Pr$(0)$ be the probability of observing 0 when we prepare a qubit in the ground state, subject it to a Hadamard gate, and then measure it. Let Pr$(1)$ be the corresponding probability of observing 1 for the same experiment (i.e., we prepare a qubit in the ground state, subject it to a Hadamard gate, and then measure it). Additionally, let $f$ be the average readout fidelity and $\epsilon$ be the readout fidelity asymmetry:
\begin{equation}
f = \frac{f_0+f_1}{2}
\label{eq:f_def}
\end{equation}
\begin{equation}
\epsilon = f_0 - f_1
\label{eq:eps_def}.
\end{equation}
Thus, in the presence of readout noise, the probability of observing 0 and 1 for each qubit in \mbox{Figure \ref{fig:27_hadamards_with_n_1}} is given by \cite{smith2021qubit}:
\begin{equation}
\begin{split}
\textrm{Pr}(0) =& {\textrm{Tr}\{M_0^\dagger M_0 \mathcal{E}(\rho)\}}\\
=& \frac{1}{2}(1-\sin2\theta)f_0 + \frac{1}{2}(1+\sin2\theta)(1-f_1)\\
=& \frac{1 + \epsilon - 2\sin2\theta(f-\frac{1}{2})}{2}\\
=& \frac{1+\gamma}{2}
\end{split}
\label{eq:pr0}
\end{equation}
\begin{equation}
\begin{split}
\textrm{Pr}(1) =& {\textrm{Tr}\{M_1^\dagger M_1 \mathcal{E}(\rho)\}}\\
=& \frac{1}{2}(1-\sin2\theta)(1-f_0) + \frac{1}{2}(1+\sin2\theta)f_1\\
=& \frac{1 - \epsilon + 2\sin2\theta(f-\frac{1}{2})}{2}\\
=& \frac{1-\gamma}{2}
\end{split}
\end{equation}
where
\begin{equation}
\gamma = \epsilon - 2\sin2\theta\left(f-\frac{1}{2}\right)
\label{eq:theta_def}.
\end{equation}

\par 
Let $(s_{n-1} s_{n-2} \cdots s_0)$ represent the $n$-bit string with $s_i \in \{0,1\}$, and let $s = \sum\limits_{i=0}^{n-1}2^i s_i$ be the decimal integer equivalent. In the absence of cross-talk between gates, 
$\mathds{P}^{noisy} = \{p_s^{noisy}\}_{s=0}^{N-1}$ where:
\begin{equation}
p_s^{noisy}
=\prod\limits_{i=0}^{n-1} 
\left( \frac{1+\gamma_i}{2} \right)^{1-s_i}
\left( \frac{1-\gamma_i}{2} \right)^{s_i}\\
\end{equation}
and $\gamma_i$ refers to the $i$-th register element. Thus,
\begin{equation}
\begin{split}
BC( \mathds{P}^{ideal}, \mathds{P}^{noisy}) =&
\sum\limits_{s=0}^{N-1}\sqrt{\frac{1}{2^n} \prod\limits_{i=0}^{n-1}
\left( \frac{1+\gamma_i}{2} \right)^{1-s_i}
\left( \frac{1-\gamma_i}{2} \right)^{s_i}
}\\
=& \frac{1}{2^n} \sum\limits_{s=0}^{N-1} \prod\limits_{i=0}^{n-1} \sqrt{ 1+\gamma_i }^{1-s_i}
\sqrt{ 1-\gamma_i}^{s_i}\\
\end{split}
\end{equation}
and the sum is over all bit strings $\tilde{s}=(s_{n-1} s_{n-2} \cdots s_0)$ with decimal integer equivalent $s = \sum\limits_{i=0}^{n-1}2^i s_i$. If $\delta$ is the user-defined threshold on the acceptable Hellinger distance, then it follows that:
\begin{equation}
1-BC( \mathds{P}^{ideal}, \mathds{P}^{noisy}) \leq \delta^2
\end{equation}
or
\begin{equation}
\frac{1}{2^n}
\sum\limits_{s=0}^{N-1}
\prod\limits_{i=0}^{n-1}
\sqrt{ 1+\gamma_i }^{1-s_i}
\sqrt{ 1-\gamma_i}^{s_i}
\leq \delta^2
\label{eq:hadamard_gamma}.
\end{equation}
Without loss of generality, we consider that all register elements have identical readout  and gate errors, which allows us to replace $\gamma_i$ with $\gamma$ for each of the $n$ register elements. Thence:
\begin{equation}
\begin{split}
1- \frac{1}{2^n} \sum\limits_{s=0}^{N-1} \prod\limits_{i=0}^{n-1} \sqrt{ 1+\gamma_i }^{1-s_i} \sqrt{ 1-\gamma_i}^{s_i} &\leq \delta^2\\
\Rightarrow 
1- \frac{1}{2^n} \sum\limits_{s=0}^{N-1} \prod\limits_{i=0}^{n-1} \sqrt{ 1+\gamma}^{1-s_i} \sqrt{ 1-\gamma}^{s_i} &\leq \delta^2\\
1- \frac{1}{2^n} \sum\limits_{k=0}^{n-1} {n \choose k} \sqrt{ 1+\gamma}^{k} \sqrt{ 1-\gamma}^{n-k} &\leq \delta^2\\
1- \frac{1}{2^n} \left( \sqrt{ 1+\gamma} + \sqrt{ 1-\gamma} \right)^n &\leq \delta^2\\
1- \left( \frac{\sqrt{ 1+\gamma} + \sqrt{ 1-\gamma}}{2} \right)^n &\leq \delta^2\\
\label{eq:simple}
\end{split}.
\end{equation}
As shown in Appendix B, when assuming that $\delta$ is small, this yields:
\color{black}
\begin{equation}
\gamma_D(\mathcal{C}) \leq \gamma_{\textrm{max}}(\mathcal{C})
\label{eq:repeated_ineq}
\end{equation}
with
\begin{equation}
\gamma_{\textrm{max}}(\mathcal{C}) = 2(1-\delta^2)^{1/n} \sqrt{1-(1-\delta^2)^{2/n}}
\end{equation}
and
\begin{equation}
\gamma_D(\mathcal{C}) = \left| \epsilon - 2\sin2\theta\left(f-\frac{1}{2}\right)\right|.
\end{equation}
{Here, $\gamma_D(\mathcal{C})$ is a composite parameter, estimated using characterization of the individual gate operations, e.g., during periodic device calibration. For our specific circuit and device noise model, the parameter is a function of the gate angle error and measurement fidelity and serves as a combined measure of the hardware noise and output variation; the exact relationship is a function of the noisy circuit model used.} Thus, we have shown that if $\gamma_D(\mathcal{C})$ exceeds the bound $\gamma_{\textrm{max}}(\mathcal{C})$, then the threshold on the Hellinger distance is also exceeded. Notably, the former statement does not require experimental execution of the circuit itself or even estimation of the Hellinger distance, but rather depends solely on our model for the noisy device and accuracy of the parameter estimation.
\par
As we have assumed, $\delta$ is small, and \mbox{Equation (\ref{eq:simple})} may be reduced as:
\begin{equation}
\delta \geq \frac{1}{2}\sqrt{\frac{n}{2}}\gamma_D(\mathcal{C})
\end{equation}
which tells us that the tolerance specification for an experimentally-observed output distribution when using a noisy device ($D$) that is lower bounded. Attempts to reproduce the experiment must use an error bar larger than this minimum, otherwise it will likely fail the accuracy test, and {the ability of the device to reproduce the output distribution will be questioned.}
\section{Results}\label{sec:results}
We validated our method for testing the reproducibility of the previously described test circuits on the superconducting transmon hardware, called \toronto, from IBM, whose schematic is shown in \mbox{Figure \ref{fig:toronto}}. The test circuit shown in \mbox{Figure \ref{fig:27_hadamards_with_n_1}} was programmed using the IBM qiskit toolkit \cite{ibm_quantum_experience_website} and compiled and executed remotely on the \toronto device on 8 April 2021. 
\par
To estimate the device parameters, we repeated our experiments {to estimate the readout fidelity and gate angle error} $L$ times (each). Let $l$ denote the index of the $l$-th experiment. For any instance of circuit execution, the \toronto device prepared an ensemble of $S$ identical circuits, where $S$ denotes the number of shots and $s$ denotes the $s$-th shot in a particular experiment. In the tests reported below, $L=203$ was the number of repetitions successfully executed during a 30-min reservation-window on \toronto, and $S=8,192$ was the number of shots, the maximum allowed by the device. We separately analyzed the results for the case $n = 1$ in \mbox{Equation (\ref{eq:repeated_ineq})} using each of the 27 register elements available in \toronto.
\par
Next, we characterized the device parameters required to estimate  $\gamma_D(\mathcal{C})$  and $\gamma_{\textrm{max}}$. Here, we introduce the convention that a random variable is denoted by bold font, and a caret sign denotes a particular realization of that random variable. We first characterized readout fidelity, in which SPAM(0) denotes an experiment with a register element, prepared as $\ket{0}$ and measured. Similarly, SPAM(1) denotes an experiment in with a register element, prepared as $\ket{1}$ measured. Then, let $\boldsymbol{b}^{SPAM(0)}_{l,s,q}$ denote the binary outcome of measuring in the computational basis, when collecting the $s$-th shot of the $l$-th experiment of the SPAM(0) circuit on the $q$-th register element. Additionally, let $\boldsymbol{f}_1^q(l)$ denote the initialization fidelity observed in the $l$-th SPAM(1) experiment for the $q$-th register element. Similarly, let $\boldsymbol{f}_0^q(l)$ denote the initialization fidelity observed in the $l$-th SPAM(0) experiment for the $q$-th register element. Thus:
\begin{equation}
\boldsymbol{f}_1^q(l) = \frac{\sum\limits_{s=1}^{S} \boldsymbol{b}^{SPAM(1)}_{l,s,q}}{S}
\end{equation}

\begin{equation}
\hat{f}_1^q(l) = \frac{\sum\limits_{s=1}^{S} \hat{b}^{SPAM(1)}_{l,s,q}}{S}
\label{eq:spam1}
\end{equation}

and

\begin{equation}
\boldsymbol{f}_0^q(l) = 1-\frac{\sum\limits_{s=1}^{S} \boldsymbol{b}^{SPAM(0)}_{l,s,q}}{S}
\end{equation}
\begin{equation}
\hat{f}_0^q(l) = 1-\frac{\sum\limits_{s=1}^{S} \hat{b}^{SPAM(0)}_{l,s,q}}{S}
\label{eq:spam0}.
\end{equation}

Let $\boldsymbol{\epsilon}^q_l$ denote the realized fidelity asymmetry of the $q$-th register element in the $l$-th experiment. Thus: 
\begin{equation}
\boldsymbol{\epsilon}^q_l = \boldsymbol{f}_0^q(l) - \boldsymbol{f}_1^q(l).
\end{equation}
Let $\boldsymbol{\bar{\epsilon}}_q$ denote the mean of the fidelity asymmetry for the $q$-th register element over the $L$ experiments, and let $\hat{\bar{\epsilon}}_q$ be the corresponding observed value.
Thus:
\begin{equation}
\boldsymbol{\bar{\epsilon}}^q = \frac{\sum\limits_{l=1}^{L} \boldsymbol{\epsilon}_l^q }{L}
\label{eq:epsilon_bar}
\end{equation}
and
\begin{equation}
\hat{\bar{\epsilon}}_q = \frac{\sum\limits_{l=1}^{L} \hat{\epsilon}_l^q}{L}.
\end{equation}

To quantify the error on these measurements, we define $\boldsymbol{\sigma}( \boldsymbol{\bar{\epsilon}}^q )$ as the standard deviation of population mean $\boldsymbol{\bar{\epsilon}}^q$, such that:
\begin{equation}
\hat{\sigma}^2( \boldsymbol{\bar{\epsilon}}^q ) =  \frac{\hat{\sigma}^2( \boldsymbol{\epsilon}^q )}{L}
= \frac{1}{L(L-1)} \sum\limits_{l=1}^{L} \left( \hat{\epsilon}^q_l - \hat{\bar{\epsilon}}^q_l \right)^2.
\end{equation}
The average readout fidelity $f^q$ for each qubit $q$ is then calculated using \mbox{Equation (\ref{eq:f_def})}. 
\par
\mbox{Figure \ref{fig:f0f1_toronto_qubit_0_onwards_spruce_2021}} shows the experimental results for the fidelity distributions of $f_0$ and $f_1$ for the first nine register elements of \toronto, collected on 8 April 2021, between 8:00-10:00pm (UTC-05:00).  The initialization fidelities of the computational states are not the same and, for some register elements, they are very distinct. The asymmetric nature of the single qubit noise channel is brought out starkly by the negligible overlap between the distributions of $f_0$ and $f_1$ for qubit~$5$. Additionally, observe the significant spread in values (indicating channel instability) in qubit 3, relative to the others. These results show that the naive approach of assuming a single value for readout error for a qubit is dangerous. Not only do we have to characterize $f_0$ and $f_1$ separately, our work must also take into account the significant dispersion around the mean. \color{black} The remaining register elements are shown in the appendix in \mbox{Figures \ref{fig:f0f1_toronto_qubit_9_onwards_spruce_2021}} and \ref{fig:f0f1_toronto_qubit_18_onwards_spruce_2021}. The register variation of the readout asymmetry is shown in \mbox{Figure~\ref{fig:asymmetry_toronto_04082021_spruce}}.

\begin{figure}[H]
  \centering
  \includegraphics[width=8cm]{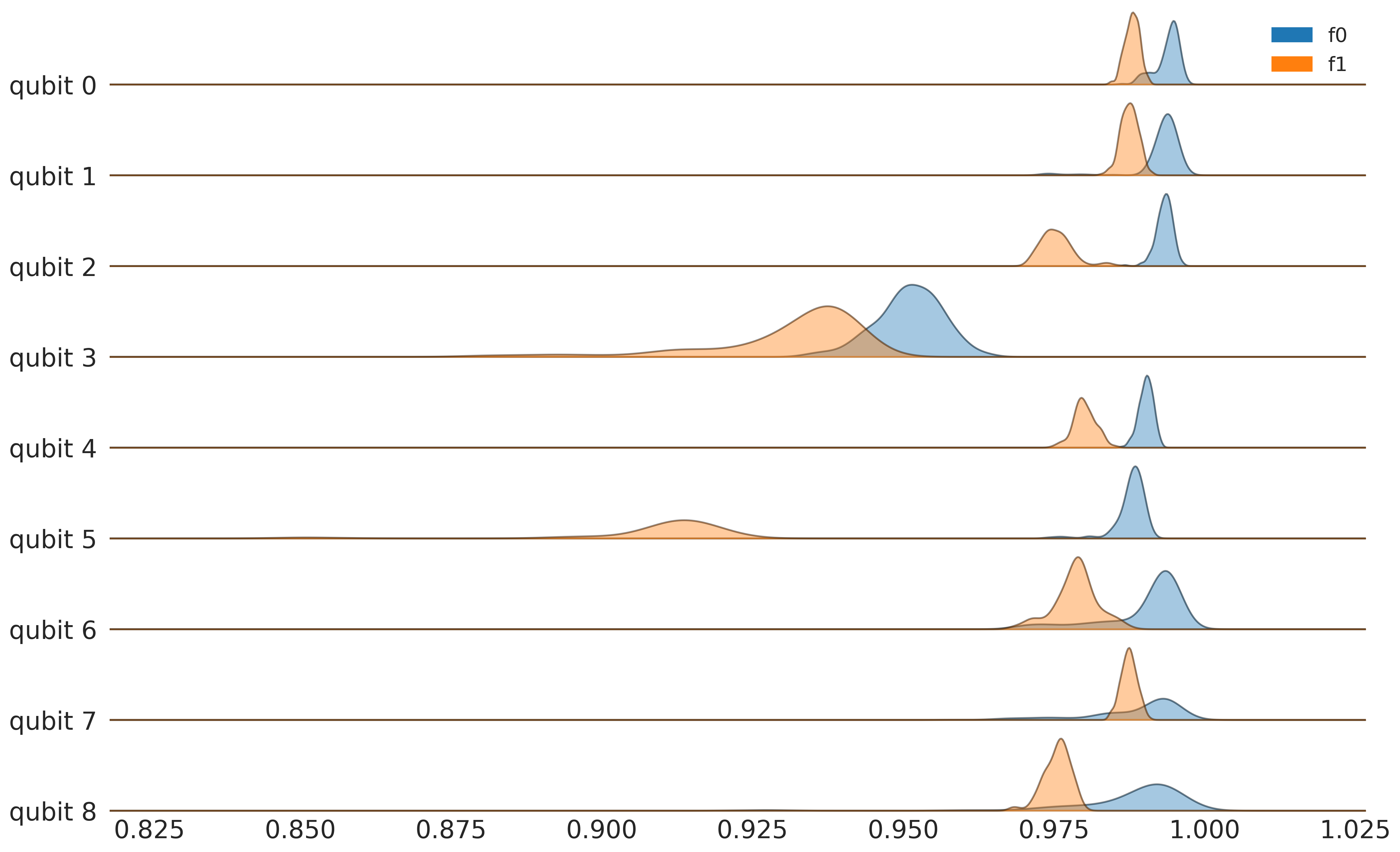}%
  \caption{Fidelity distributions for computational basis states of \toronto for qubits $0-8$. Raw data for \toronto, collected on 8 April 8 2021, between 8:00-10:00pm (UTC-05:00). The remaining register elements are shown in the appendix in \mbox{Figures~\ref{fig:f0f1_toronto_qubit_9_onwards_spruce_2021}} and \ref{fig:f0f1_toronto_qubit_18_onwards_spruce_2021}.}
  \label{fig:f0f1_toronto_qubit_0_onwards_spruce_2021}
\end{figure}

The probability (Pr$^q(0)$) for each qubit, as defined by \mbox{Equation (\ref{eq:pr0})}, was estimated by executing the circuit $\mathcal{C}$ in \mbox{Figure \ref{fig:27_hadamards_with_n_1}} and counting the faction of zeros in the 8192-bit long binary string, returned by the remote server. Let $\boldsymbol{b}_{l,s,q}^{\mathcal{C}}$ denote the random measurement outcome (a classical bit) when we conduct a $\mathcal{C}$ experiment and measure the $q$-th register element in the $l$-th experiment's $s$-th shot (measurement done in the computational $Z$-basis). Let $\hat{b}_{l,s,q}^{\mathcal{C}}$ denote the corresponding observed value. Similarly, let $\boldsymbol{\textrm{Pr}}_l^q(1)$ denote the probability of observing 1 as the outcome in the $l$-th experiment for the $q$-th register element. Similarly, let $\boldsymbol{\textrm{Pr}}_l^q(0)$ denote the probability of observing 0 as the outcome in the $l$-th experiment for the $q$-th register element. Thus:
\begin{equation}
\boldsymbol{\textrm{Pr}}_l^q(1) = \frac{\sum\limits_{s=1}^{S} \boldsymbol{b}_{l,s,q}^{\mathcal{C}}}{S}
\end{equation}
\begin{equation}
\hat{\textrm{Pr}}_l^q(1) = \frac{\sum\limits_{s=1}^{S} \hat{b}_{l,s,q}^{\mathcal{C}}}{S}
\end{equation}
and
\begin{equation}
\boldsymbol{\textrm{Pr}}_l^q(0) = 1 - \boldsymbol{\textrm{Pr}}_l^q(1)
\end{equation}
\begin{equation}
\hat{\textrm{Pr}}_l^q(0) = 1 - \hat{\textrm{Pr}}_l^q(1).
\end{equation}
\par
Let $\boldsymbol{d}_l^q$ denote the Hellinger distance between the noisy and ideal outcomes in the $l$-th experiment for the $q$-th register element. Let $\hat{d}_l^q$ be the corresponding observed value (or realization). Let $\boldsymbol{\bar{d}}^q$ denote the mean (a random variable) of the Hellinger distance for the $q$-th register element over $L$ experiments. Let $\hat{\bar{d}}^q$ be the corresponding observed value (or realization). Thus:
\begin{equation}
\boldsymbol{d}_l^q = \sqrt{ 1- \sqrt{\frac{1}{2} \boldsymbol{\textrm{Pr}}_l^q(0)} - \sqrt{\frac{1}{2} \boldsymbol{\textrm{Pr}}_l^q(1)} }
\end{equation}
\begin{equation}
\hat{d}_l^q = \sqrt{ 1- \sqrt{\frac{1}{2} \hat{\textrm{Pr}}_l^q(0)} - \sqrt{\frac{1}{2} \hat{\textrm{Pr}}_l^q(1)  } }
\end{equation}
\begin{equation}
\boldsymbol{\bar{d}}^q = \frac{\sum\limits_{l=1}^{L} \boldsymbol{d}_l^q }{L}
\end{equation}
\begin{equation}
\hat{\bar{d}}^q = \frac{\sum\limits_{l=1}^{L} \hat{d}_l^q }{L}
\end{equation}

To quantify the error on these measurements, define $
\boldsymbol{\sigma}( \boldsymbol{\bar{d}}^q )$ as the standard deviation of population mean $\boldsymbol{\bar{d}}^q$. Thus:

\begin{equation}
\hat{\sigma}^2( \boldsymbol{\bar{d}}^q ) =  \frac{\hat{\sigma}^2( \boldsymbol{d}^q )}{L}
= \frac{1}{L(L-1)} \sum\limits_{l=1}^{L} \left( \hat{d}^q_l - \hat{\bar{d}}^q \right)^2.
\end{equation}
The register variation of the experimentally-obtained Hellinger distance is shown in \mbox{Figure \ref{fig: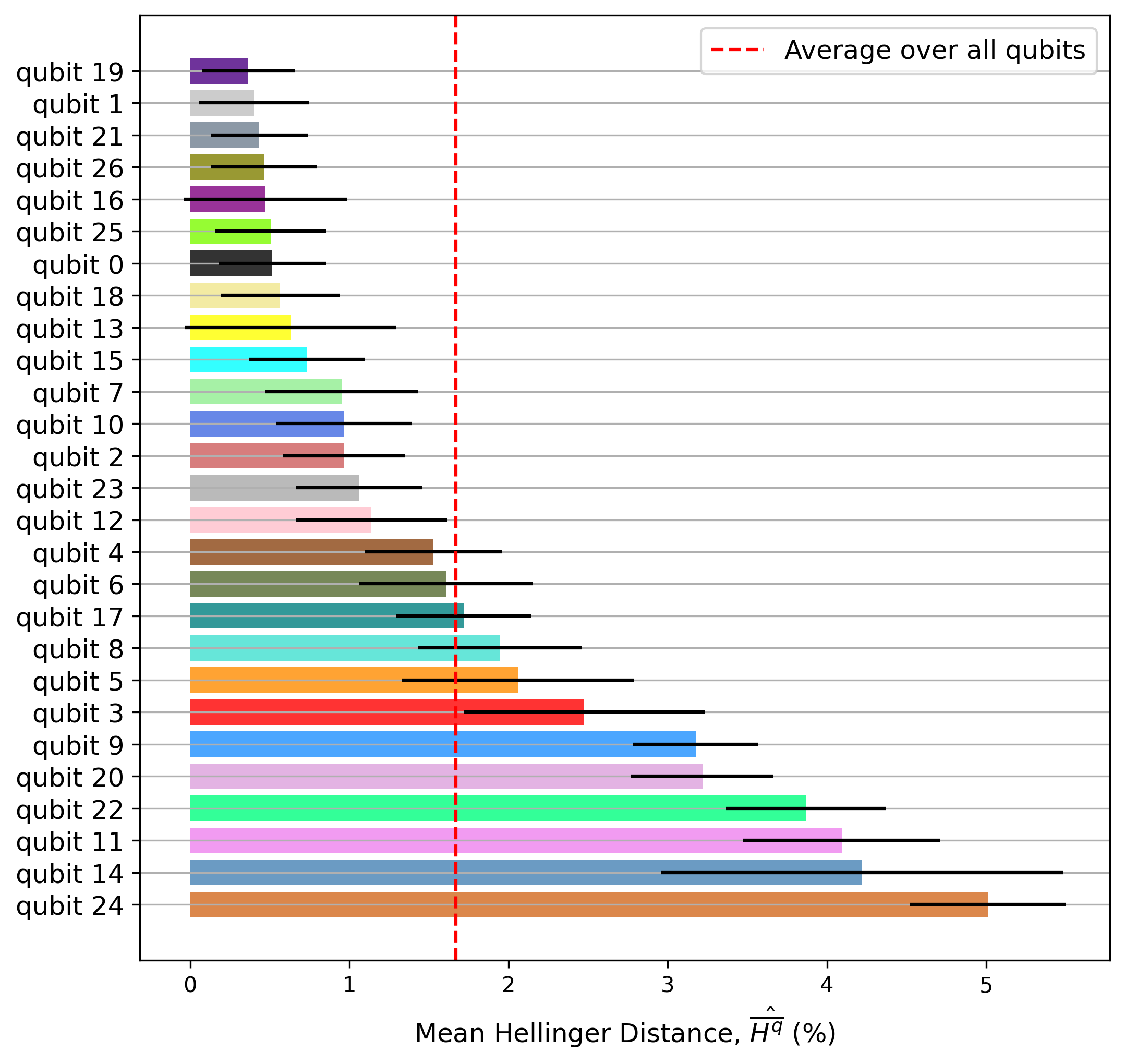}}. The Hadamard gate angle error was subsequently estimated using \mbox{Equation (\ref{eq:theta_def})}. The register variation of the the Hadamard gate angle error (in degrees) is shown in \mbox{Figure \ref{fig:theta_sq_qubitwise}}. {The dotted red line denotes the register mean for the gate error, averaged over all qubits. The gate implementation on qubit $21$ is the closest to ideal, while qubit $24$ is the farthest.}
\begin{figure}[H]
  \centering
  \includegraphics[width=8cm]{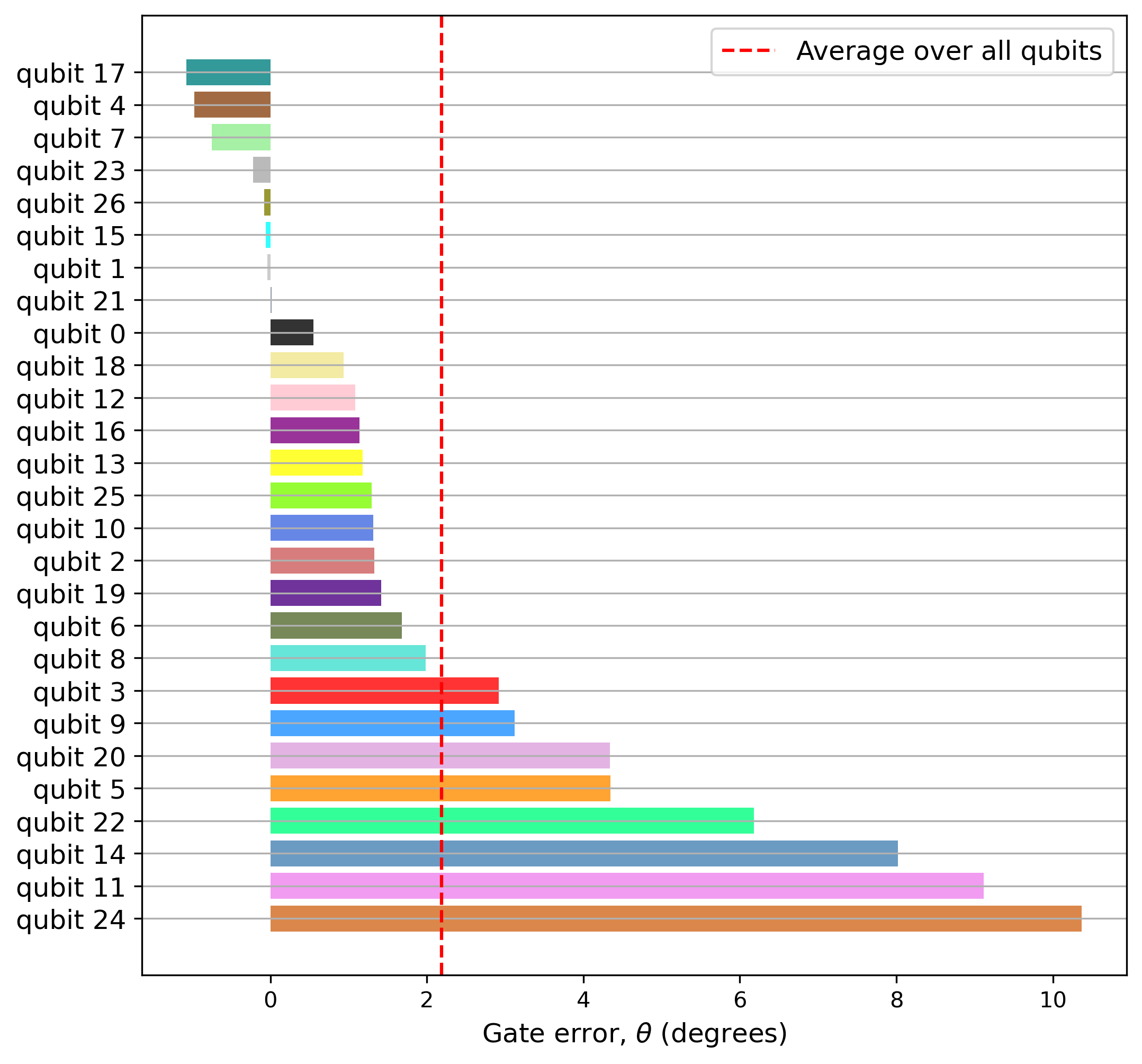}
  \captionof{figure}{Register variation of the experimentally-obtained Hadamard gate error (in degrees). Each of the 27 register elements was used to verify \mbox{Equation (\ref{eq:repeated_ineq})} for $n=1$. The dotted red line denotes the register mean for the gate error (averaged over all qubits). Qubit $21$ is the closest to ideal, while qubit $24$ is the farthest. A consistent register color scheme has been used for all the figures.}
  \label{fig:theta_sq_qubitwise}
\end{figure}

\mbox{Figure \ref{fig:model_verification_corrected_lower}} shows the values for $\gamma_{\textrm{max}}$ and $\gamma_D$ for \toronto on 8 April 2021, when $\delta$ is set to be the observed Hellinger distance. The blue dots are experimentally-observed data for each register element (see Table~\ref{tab:gamma_vals} for the full list), using the characterization data versus the actual observed Hellinger distance. It validates our noise model as \mbox{Equation (\ref{eq:gamma_tau})} holds. {The dashed line in \mbox{Figure \ref{fig:model_verification_corrected_lower}} provides the decision boundary to test circuit reproducibility, using characterization data as a proxy. Given a user-defined threshold on the statistical distance between the observed distribution and reference to be reproduced, the plot provides an upper bound for the device parameter $\gamma_D$ (the register variation of $\gamma_D$ is shown in \mbox{Figure \ref{fig:gamma_qubitwise}}). The latter must lie below this boundary for reproducibility by the device.} We conjecture that the magnitude of $|\gamma_{\textrm{max}}-\gamma_D|$ serves as a reliability indicator, i.e., higher value provides greater cushion against temporal fluctuations (see \cite{dasgupta2021stability}). Thus, Table~\ref{tab:gamma_vals} can serve as a basis for register selection tailored to a specific unitary when channel characterization data is available.

\begin{figure}[H]
  \centering
  \includegraphics[width=8cm]{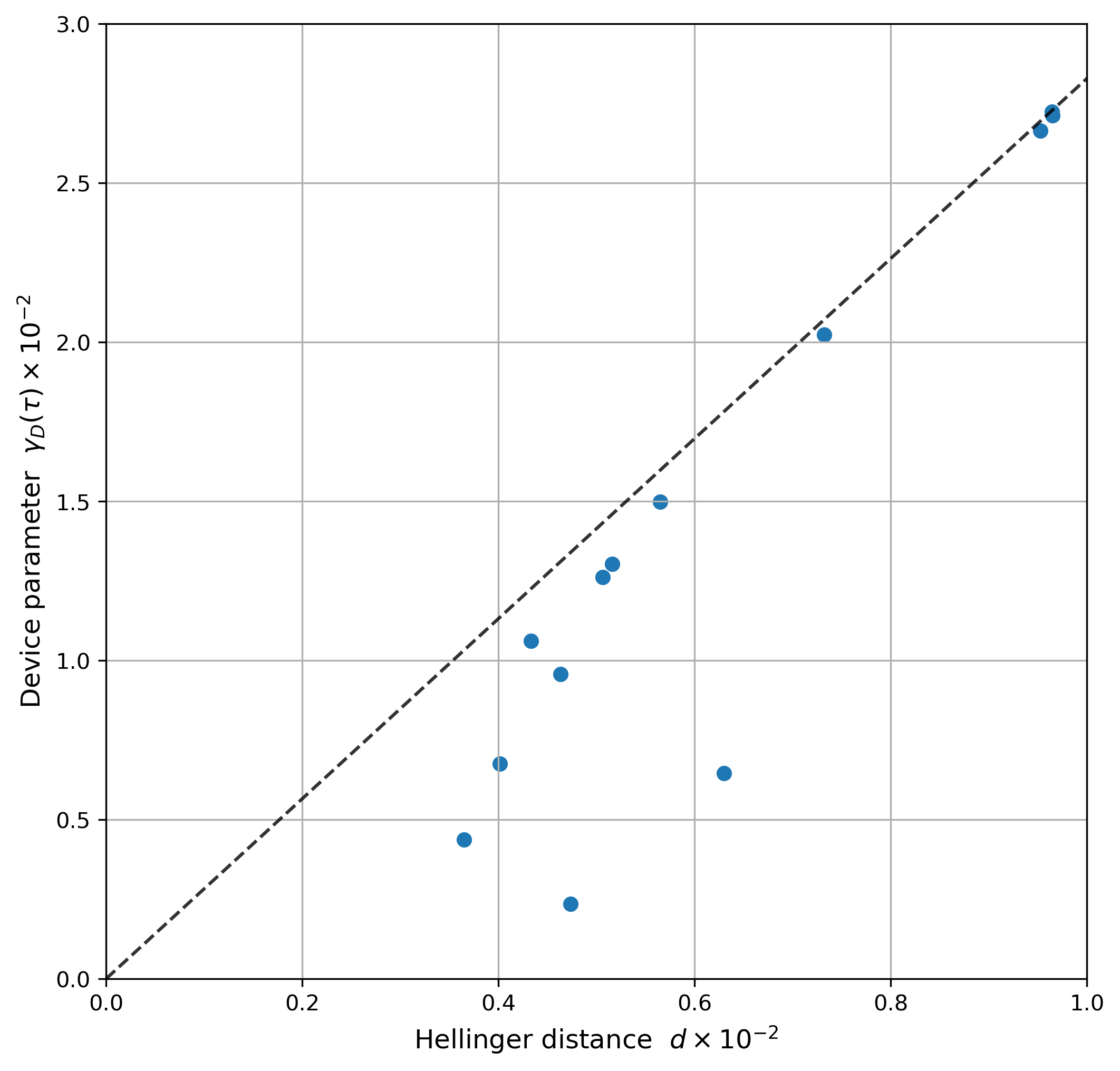}
  \captionof{figure}{Characterizing circuit reproducibility on \toronto. Plot of $\gamma_{\textrm{max}}$ (dashed line) and $\gamma_D$ (blue dots) for \toronto on 8 April 2021, when $\delta$ is set to be the observed Hellinger distance $(d)$.  \mbox{Equation (\ref{eq:gamma_tau})} should hold for all $\delta$ and hence it must hold for the corner case when $\delta$ is set to be the experimentally observed Hellinger distance. The blue dots are experimentally-observed data plotted using the characterization data versus the actual observed Hellinger distance $(d)$ for each register element. Only a subset of qubits are shown. See Table~\ref{tab:gamma_vals} for all 27 qubits.}
  \label{fig:model_verification_corrected_lower}
\end{figure}%

\section{Conclusion}\label{sec:conclusion}
Reproducibility is important for validating the performance of applications in quantum computing and as a measure of consistency in computation. Current NISQ devices are strongly affected by intrinsic noise that lead to a variety of computational error mechanisms. Here, we have characterized the ability to reproduce the output generated by noisy quantum circuit, using single qubit rotation gates and asymmetric noisy readout. We showed how to derive a composite parameter that can be easily computed from the available device characterization data (e.g., readout fidelity and gate error), such that, if the derived parameter stays below a defined threshold, then the circuit is assured of reproducible output. Our model led to an analytic bound, in terms of the Hellinger distance between computational outputs, and we validated the resulting test using experiments on a publicly available transmon processor. 
\par 
Despite the successful validation, we note that this approach has its limitations. A preliminary version of these results used only readout asymmetry and ignored the Hadamard gate error for describing circuit noise \cite{dasgupta2021reproducibility}. For example, on Aug 21, 2021, the typical mean error rates were $(7.38 \pm 5.90) \times 10^{-4}$ for single-qubit Pauli-X error and $(1.15 \pm 1.08) \times 10^{-1}$ for readout assignment error on the $27$-qubit IBM quantum  device \toronto. See \mbox{Figure \ref{fig:scatter_data}} for the results from that first-order model that failed the test of Lemma A1 for 5 of the 27 qubits, due to incorrect assumptions about the device model. An analytically-derived composite parameter, using a wrong noise model, cannot serve as a suitable proxy for circuit reproducibility. Similarly, if the device parameters themselves cannot be estimated from the available device characterization data, then the scheme is impractical.
\par
While our method is generalizable in principle, a closed form expression for $\gamma_D(\mathcal{C})$ may not be obtainable for complex circuits. As the number of parameters needed to characterize the channel increases with circuit complexity (i.e., number of qubits and circuit depth); one will need to either utilize numerical methods to deduce $\gamma_D(\mathcal{C})$ or make simplifying assumptions. A circuit-specific modular or layered characterization for the device may reduce the parameter estimation overhead, but this remains to be investigated. There is still an open question regarding how to systematically extend this method to any model and what conditions must the model meet.
\par
As quantum computations enter the realm of advantage over classical, we expect that the task of practically characterizing circuit reproducibility will become an important area. To our knowledge, this is a first attempt to frame the problem of linking circuit reproducibility to device reliability. Such reproducibility characterization serves four purposes. Firstly, it shows how to make use of the device characterization data in a targeted way for inferring circuit reproducibility. Second, it provides a basis for selecting a device for a circuit prior to execution. Third, it provides insight on how the quality of final digital output is related to the characteristics of the intermediate quantum channels and, hence, understand the device improvement pathways for a specific circuit. Lastly, it helps to estimate a lower bound for the error that can be expected from the circuit execution.

\section*{Acknowledgment}
\small{This work was supported in part by the US Department of Energy, Office of Science, Early Career Award Program and by the U.S. Department of Energy, Office of Science, National Quantum Information Science Research Centers, Quantum Science Center. This research used resources of the Oak Ridge Leadership Computing Facility, which is a DOE Office of Science User Facility supported under Contract DE-AC05-00OR22725. This manuscript has been authored by UT-Battelle, LLC under Contract No. DE-AC05-00OR22725 with the U.S. Department of Energy. The United States Government retains and the publisher, by accepting the article for publication, acknowledges that the United States Government retains a non-exclusive, paid-up, irrevocable, worldwide license to publish or reproduce the published form of this manuscript, or allow others to do so, for United States Government purposes. The Department of Energy will provide public access to these results of federally sponsored research in accordance with the DOE Public Access Plan (http://energy.gov/downloads/doe-public-279access-plan).}

\bibliographystyle{ieeetr}
\bibliography{stability-references-aug2-2021.bib}

\section*{Appendix}
\subsection{Samples required for reliable estimation of $n$-dimensional distribution}\label{sec:appendix_precision}
To see this, consider an $n$ qubit register, which can have $N=2^n$ outcomes. We want to estimate the histogram for the $N$ outcomes. Suppose the true probability for the histogram bin $s$ (where $s \in \{0,1,\ldots, N-1\}$ is given by $p_s$ with:
\begin{equation}
\sum\limits_{s=0}^{N-1} p_s = 1.
\end{equation}
Suppose we run the experiment $T$ times and collect the $T$ $n$-bit strings. We are asking what is the minimum value for $T$ to achieve a desired precision. Let $Y_s$ denote the indicator variable, which is $1$ if $s$ is observed and $0$ otherwise. Thus, $Y_s$ is a Bernoulli variable, with mean $p_s$ and variance $p_s(1-p_s)$. After the $T$ circuit runs, the sample estimate for $p_s$ is:
\begin{equation}
\hat{p}_s = \frac{\sum\limits_{t=0}^{T-1} Y_s(t)}{T}
\end{equation}
which is the population mean for the time-series of the indicator variable $Y_s(t)$. Now suppose the desired precision ($\epsilon$), at a confidence level $\alpha$, is expressed as a percentage (for all s), as follows:
\begin{equation}
\textrm{Pr}\left( p_s(1-\epsilon) \leq \hat{p}_s \leq p_s(1+\epsilon) \right) = 1-\alpha
\label{eq:precision}
\end{equation}
\begin{equation}
\Rightarrow \textrm{Pr}\left( 
-\frac{p_s\epsilon}{\sqrt{p_s(1-p_s)/T}} \leq 
\frac{\hat{p}_s-p_s}{\sqrt{p_s(1-p_s)/T}} \leq 
\frac{p_s\epsilon}{\sqrt{p_s(1-p_s)/T}} \right) = 1-\alpha.
\end{equation}
However, from the central limit theorem, we know that $\frac{\hat{p}_s-p_s}{\sqrt{p_s(1-p_s)/T}}$ follows the standard normal distribution $z \sim \mathcal{N}(0,1)$. Now, for the standard normal variable z, we know that:
\begin{equation}
Pr\left( -z_{\alpha/2} \leq z \leq z_{\alpha/2} \right) = 1-\alpha
\label{eq:gaussian_alpha}
\end{equation}
where $z_{\alpha/2}$ is a constant corresponding to the two-sided confidence interval. Hence, if we set: 
\begin{equation}
\frac{\hat{p}_s-p_s}{\sqrt{p_s(1-p_s)/T}} = z_{\alpha/2}
\end{equation}
then both \mbox{Equation (\ref{eq:precision})} and \mbox{Equation (\ref{eq:gaussian_alpha})} are satisfied. Solving, we get the bound for T as:
\begin{equation}
\begin{split}
T &= \left( \frac{1}{p_s}-1\right) \frac{z_{\alpha/2}^2}{\epsilon^2}\\
&\sim \mathcal{O}\left( \frac{2^n z_{\alpha/2}^2}{\epsilon^2} \right)\\
\end{split}
\end{equation}
since $p_s \sim \mathcal{O}(2^{-n})$ for an n qubit register. Hence, the number of samples required to estimate an arbitrary distribution over $N=2^n$ outcomes with precision ($\epsilon$) scales exponentially with the Hilbert space dimension $n$ and as inverse square of the required precision $\epsilon$. 
\color{black}

\subsection{ }\label{sec:appendix}
\begin{lemma}
For the 1 qubit circuit, shown in \mbox{Figure \ref{fig:27_hadamards_with_n_1}}, 
if $\delta \leq 0.54$, then:
\begin{equation}
\gamma_D(\mathcal{C}) \leq \gamma_{max}(\mathcal{C}) \Leftrightarrow d \leq \delta
\label{eq:lemma}
\end{equation}
\end{lemma}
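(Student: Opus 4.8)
The plan is to specialize everything in Section~\ref{sec:method} to $n=1$ and reduce the biconditional to a monotonicity statement about a single real function. First I would instantiate $\mathds{P}^{\textrm{noisy}}=\{(1+\gamma)/2,\ (1-\gamma)/2\}$ with $\gamma=\epsilon-2\sin 2\theta\,(f-\tfrac12)$ as in Equations~(\ref{eq:pr0})--(\ref{eq:theta_def}), and $\mathds{P}^{\textrm{ideal}}=\{\tfrac12,\tfrac12\}$, so that the Bhattacharyya coefficient collapses to $BC=\tfrac12(\sqrt{1+\gamma}+\sqrt{1-\gamma})$ and hence $d^2=1-\tfrac12(\sqrt{1+\gamma}+\sqrt{1-\gamma})$. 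Since $d,\delta\in[0,1]$, the inequality $d\le\delta$ is equivalent to $d^2\le\delta^2$, i.e. to $g(\gamma)\ge 1-\delta^2$, where $g(\gamma):=\tfrac12(\sqrt{1+\gamma}+\sqrt{1-\gamma})$; this is exactly the $n=1$ instance of Equation~(\ref{eq:simple}).

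Next I would record the elementary properties of $g$ on $[-1,1]$: squaring gives $g(\gamma)^2=\tfrac12\bigl(1+\sqrt{1-\gamma^2}\bigr)$, so $g$ is even, strictly decreasing in $|\gamma|$, with $g(0)=1$ and $g(\pm1)=1/\sqrt2$. Writing $c:=1-\delta^2$, the hypothesis $\delta\le 0.54$ yields $c\ge 1-0.54^2=0.7084>1/\sqrt2$, so $c$ lies strictly inside the range $(1/\sqrt2,\,1]$ of $g$; by strict monotonicity there is a unique $\gamma^\star\in(0,1)$ with $g(\gamma^\star)=c$, and $g(\gamma)\ge c\iff|\gamma|\le\gamma^\star$. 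It then remains to identify $\gamma^\star$ with $\gamma_{\textrm{max}}(\mathcal{C})$ at $n=1$: from $g(\gamma^\star)^2=c^2$ one gets $\sqrt{1-(\gamma^\star)^2}=2c^2-1$ (legitimate precisely because $c>1/\sqrt2$ makes the right side nonnegative), and squaring once more $(\gamma^\star)^2=1-(2c^2-1)^2=4c^2(1-c^2)$, whence $\gamma^\star=2c\sqrt{1-c^2}=2(1-\delta^2)\sqrt{1-(1-\delta^2)^2}=\gamma_{\textrm{max}}(\mathcal{C})$ at $n=1$. Combining the chain,
\[ d\le\delta \;\Longleftrightarrow\; g(\gamma)\ge 1-\delta^2 \;\Longleftrightarrow\; |\gamma|\le\gamma_{\textrm{max}}(\mathcal{C}) \;\Longleftrightarrow\; \gamma_D(\mathcal{C})\le\gamma_{\textrm{max}}(\mathcal{C}), \]
using $\gamma_D(\mathcal{C})=|\gamma|$, which is the assertion of the lemma.

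The main obstacle---really the only delicate point---is the bookkeeping around the squaring operations. The step $d\le\delta\Leftrightarrow d^2\le\delta^2$ is harmless since both sides are nonnegative, but passing from $g(\gamma^\star)^2=c^2$ to $\sqrt{1-(\gamma^\star)^2}=2c^2-1$ requires $2c^2-1\ge0$, i.e. $\delta\le\sqrt{1-1/\sqrt2}\approx0.5412$; this is exactly the quantitative content of the stated hypothesis $\delta\le0.54$, and without it $g$ never dips to the level $1-\delta^2$, so the closed form for $\gamma_{\textrm{max}}$ ceases to describe the true threshold. I would also note in passing that physical validity of the readout-plus-gate model forces $\gamma\in[-1,1]$ (so that $p_0,p_1\in[0,1]$), which is precisely what justifies restricting $g$ to $[-1,1]$ and invoking its monotonicity there.
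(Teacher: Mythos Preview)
Your argument is correct and follows essentially the same algebraic route as the paper's proof: both reduce $d\le\delta$ to $\sqrt{1+\gamma}+\sqrt{1-\gamma}\ge 2(1-\delta^2)$, square twice, and use $\delta\le\sqrt{1-1/\sqrt2}\approx0.5412$ to justify the second squaring, arriving at $|\gamma|\le 2(1-\delta^2)\sqrt{1-(1-\delta^2)^2}=\gamma_{\max}$. The only difference is packaging---the paper treats the two implications separately with parallel chains of inequalities, whereas you encapsulate both at once via the evenness and strict monotonicity of $g(\gamma)=\tfrac12(\sqrt{1+\gamma}+\sqrt{1-\gamma})$ on $[0,1]$; your framing also makes explicit why the numerical hypothesis on $\delta$ is needed (to keep $1-\delta^2$ in the range of $g$, equivalently $2c^2-1\ge0$), which the paper notes but less transparently.
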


\begin{proof}[Proof.]
(i)
\begin{equation*}
\begin{split}
d \leq& \delta\\
\Rightarrow \sqrt{1-\frac{\sqrt{1+\gamma_D(\mathcal{C})}+\sqrt{1-\gamma_D(\mathcal{C})}}{2}} \leq \delta\\
\Rightarrow \sqrt{1+\gamma_D(\mathcal{C})}+\sqrt{1-\gamma_D(\mathcal{C})} \geq 2(1-\delta^2)\\
\Rightarrow \left(\sqrt{1+\gamma_D(\mathcal{C})}+\sqrt{1-\gamma_D(\mathcal{C})}\right)^2 \geq 4(1-\delta^2)^2\\
\Rightarrow \sqrt{1-\gamma_D(\mathcal{C})^2} \geq 2(1-\delta^2)^2-1\\.
\end{split}
\end{equation*}
Since $\delta \leq \sqrt{1-\frac{1}{\sqrt{2}}}$, the RHS is positive and we can square both sides to get:
\begin{equation*}
\begin{split}
\Rightarrow 1-\gamma_D(\mathcal{C})^2 \geq \left( 2(1-\delta^2)^2-1 \right)^2\\
\Rightarrow \gamma_D(\mathcal{C}) \geq 2(1-\delta^2)\sqrt{1-(1-\delta^2)^2}\\
\Rightarrow \gamma_D(\mathcal{C}) \leq \gamma_{max}(\mathcal{C}).
\end{split}
\end{equation*}
(ii) Now suppose: 
\begin{equation*}
\begin{split}
\gamma_D(\mathcal{C}) \leq&  \gamma_{max}(\mathcal{C})\\
\Rightarrow \gamma_D(\mathcal{C}) \geq 2(1-\delta^2)\sqrt{1-(1-\delta^2)^2}\\
\Rightarrow 1-\gamma_D(\mathcal{C})^2 \geq \left( 2(1-\delta^2)^2-1 \right)^2\\
\Rightarrow \sqrt{1-\gamma_D(\mathcal{C})^2} \geq 2(1-\delta^2)^2-1\\
\Rightarrow \left(\sqrt{1+\gamma_D(\mathcal{C})}+\sqrt{1-\gamma_D(\mathcal{C})}\right)^2 \geq 4(1-\delta^2)^2\\
\Rightarrow \sqrt{1+\gamma_D(\mathcal{C})}+\sqrt{1-\gamma_D(\mathcal{C})} \geq 2(1-\delta^2)\\
\Rightarrow \sqrt{1-\frac{\sqrt{1+\gamma_D(\mathcal{C})}+\sqrt{1-\gamma_D(\mathcal{C})}}{2}} \leq \delta\\
\Rightarrow d \leq& \delta.
\end{split}
\end{equation*}
\end{proof}

\begin{figure}[H]
\center
\includegraphics[width=8cm]{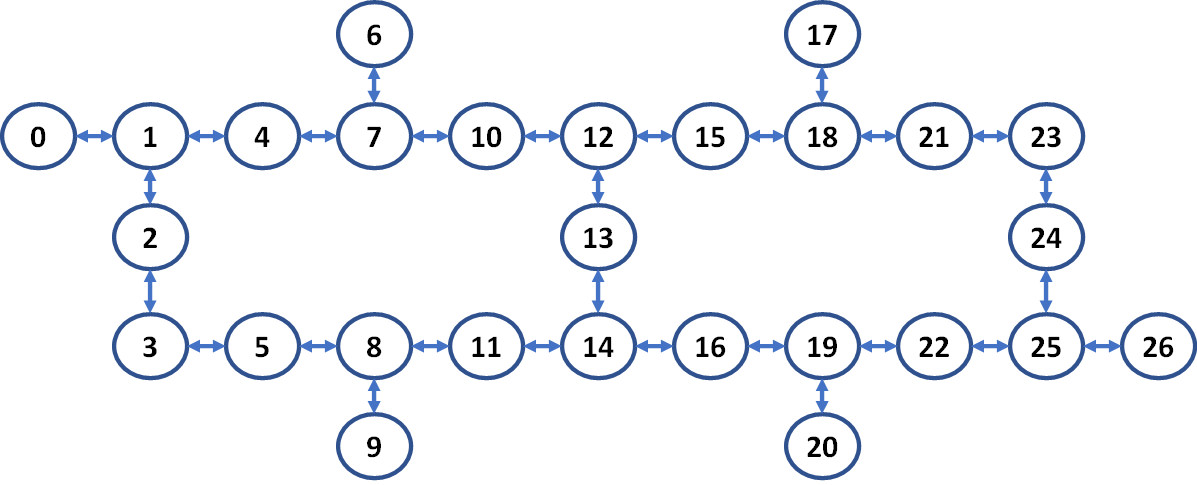}
\caption{Schematic of the \toronto device, produced by IBM. Circles represent register elements, while edges denote the connectivity for performing two qubit operations.}
\label{fig:toronto}
\end{figure}

\begin{figure}[H]
\centering
\includegraphics[width=8cm]{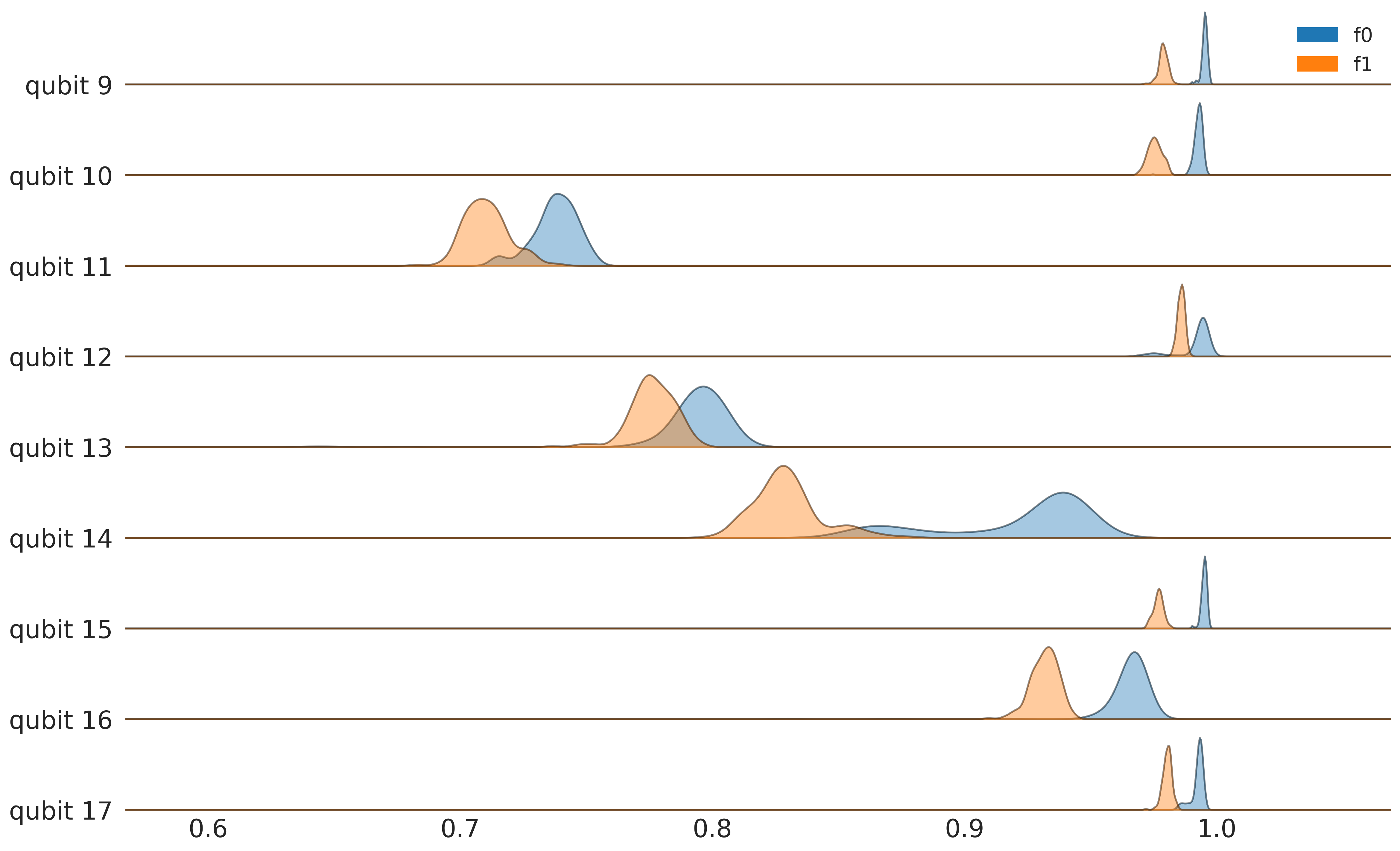}%
\caption{Fidelity distributions for computational basis states of \toronto for qubits $9-17$. Raw data for \toronto, collected on 8 April 2021, between 8:00-10:00pm (UTC-05:00).}
\label{fig:f0f1_toronto_qubit_9_onwards_spruce_2021}
\end{figure}

\begin{figure}[H]
\centering
\includegraphics[width=8cm]{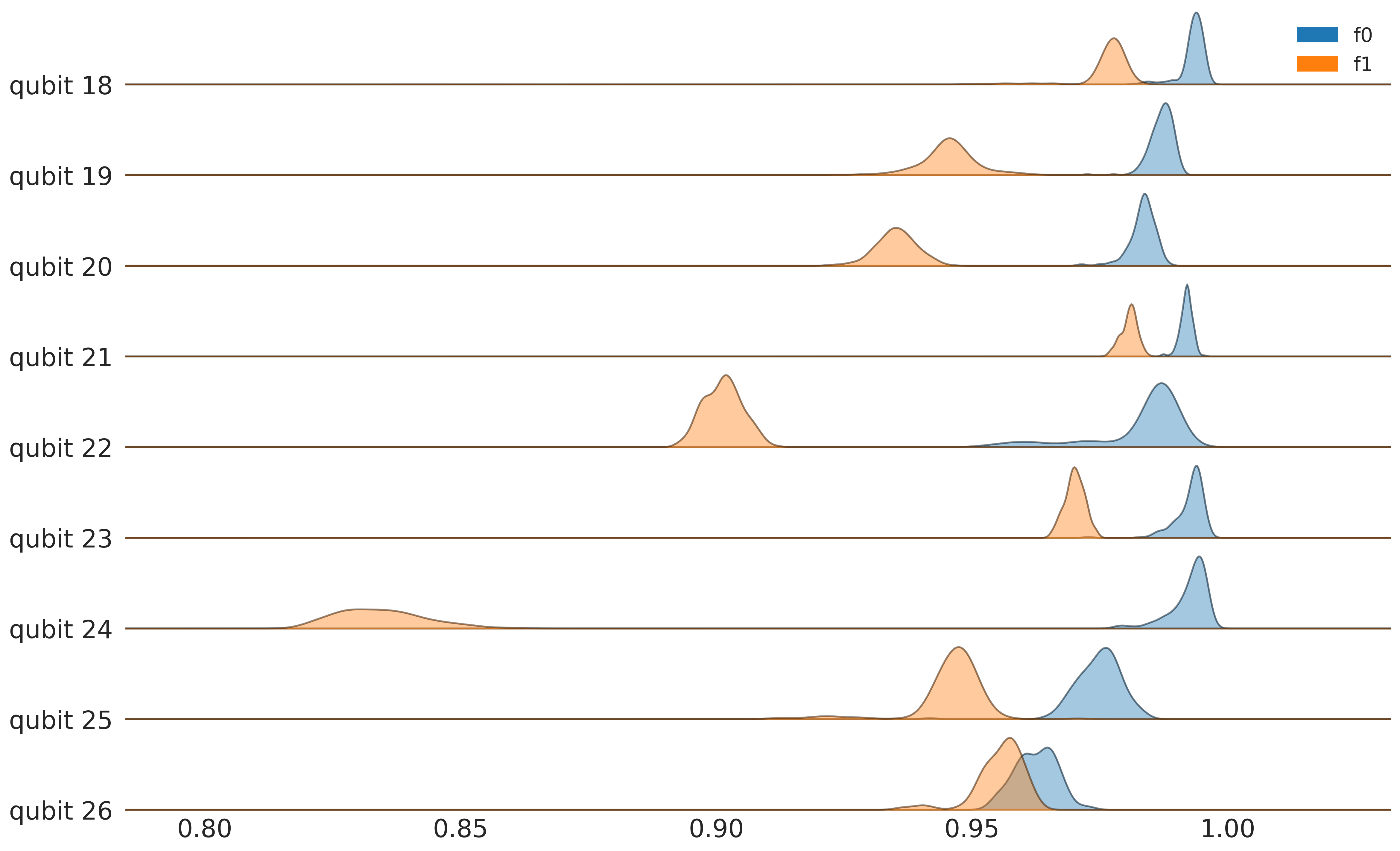}%
\caption{Fidelity distributions for computational basis states of \toronto for qubits $18-26$. Raw data for \toronto, collected on 8 April 2021, between 8:00-10:00pm (UTC-05:00).}
\label{fig:f0f1_toronto_qubit_18_onwards_spruce_2021}
\end{figure}

\begin{figure}[H]
  \centering
  \includegraphics[width=8cm]{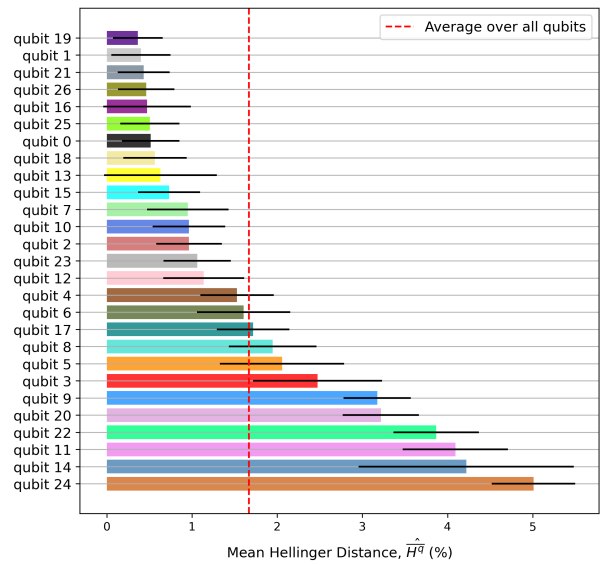}
  \captionof{figure}{Register variation of the experimentally-obtained Hellinger distance. Each of the 27 register elements was used to verify 
   \mbox{Equation (\ref{eq:repeated_ineq})} for $n=1$. The dotted red line denotes the register mean for Hellinger distance (averaged over all qubits). Qubit $19$ is the closest to ideal, while qubit $24$ is the farthest. The error bars show the standard deviation of the population mean across $L=203$ experiments. A consistent register color scheme has been used for all the figures.}
  \label{fig:hellinger_toronto_04082021_spruce.png}
\end{figure}%

\begin{figure}[H]
\center
\includegraphics[width=8cm]{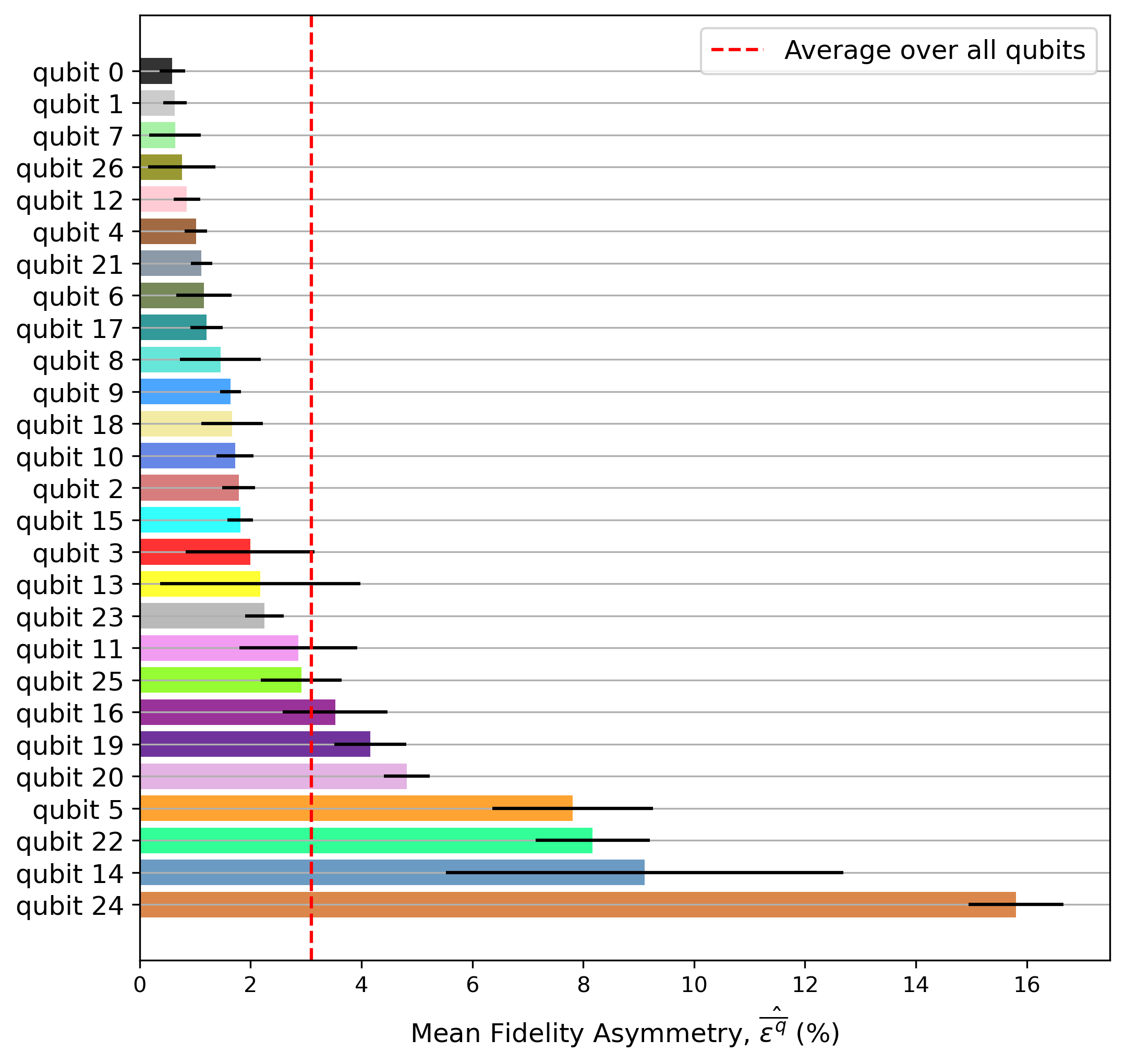}
\caption{Each of the 27 register elements was used to verify \mbox{Equation (\ref{eq:repeated_ineq})} for $n=1$. The dotted red line denotes the register mean for readout asymmetry (averaged over all qubits). Qubit $0$ has the best performance for this parameter, while qubit $24$ was the worst. The error bars show the standard deviation of the population mean across L=203 experiments. A consistent register color scheme has been used for all the figures.}
\label{fig:asymmetry_toronto_04082021_spruce}
\end{figure}

\begin{figure}[H]
  \centering
  \includegraphics[width=8cm]{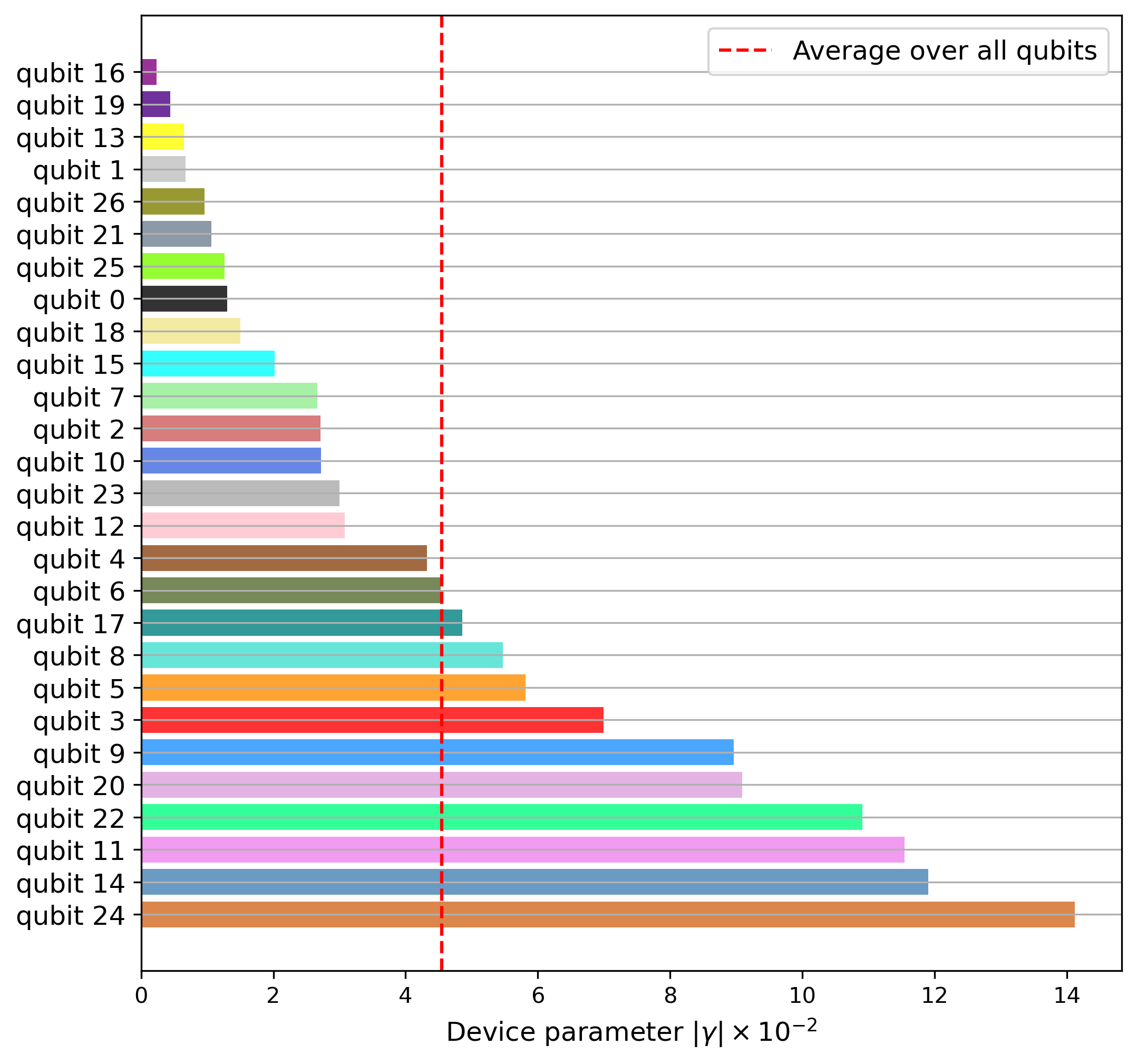}
  \captionof{figure}{Register variation of the derived parameter 
  $\gamma_D(\mathcal{C})$. A high $\gamma_D(\mathcal{C})$ adversely impacts the computational reproducibility. However, low $\gamma_D(\mathcal{C})$ does not necessarily mean that the device is close to perfect because the terms in \mbox{Equation (\ref{eq:theta_def})} can cancel each other out and lead to an improvement in the output distance. For an unstable device, $\gamma_D(\mathcal{C})$ will vary with time and must be re-estimated. The dotted red line denotes the register mean averaged over all qubits. Qubit $16$ has the best performance for this composite parameter, while qubit $24$ has the worst. A consistent register color scheme has been used for all the figures.}
  \label{fig:gamma_qubitwise}
\end{figure}%

\begin{table}[H]
\small
\caption{Register values for $\gamma_D(\tau)$ and $\gamma_{\textrm{max}}$. \label{tab:gamma_vals}}
\begin{tabular}{ccc}
\toprule
\textbf{Register \#}	& \textbf{$\gamma_{\textrm{max}}$} & \textbf{$\gamma_D(\tau)$}\\
\midrule
0&      1.4590&     1.3040\\
1&      1.1365&     0.6755\\
2&      2.7284&     2.7118\\
3&      6.9946&     6.9931\\
4&      4.3229&     4.3226\\
5&      5.8171&     5.8157\\
6&      4.5425&     4.5325\\
7&      2.6946&     2.6649\\
8&      8066&     5.4724\\
9&      8.9672&     8.9666\\
10&     2.7272&     2.7231\\
11&    11.5502&    11.5486\\
12&     3.2212&     3.0797\\
13&     1.7818&     0.6460\\
\bottomrule
\end{tabular}\\

\begin{tabular}{ccc}
\toprule
\textbf{Register \#}	& \textbf{$\gamma_{\textrm{max}}$} & \textbf{$\gamma_D(\tau)$}\\
\midrule
14&    11.9104&    11.9038\\
15&     2.0713&     2.0228\\
16&     1.3392&     0.2359\\
17&     4.8557&     4.8553\\
18&     1.5986&     1.4980\\
19&     1.0322&     0.4378\\
20&     9.0893&     9.0886\\
21&     1.2259&     1.0620\\
22&    10.9146&    10.9136\\
23&     3.0018&     3.0017\\
24&    14.1254&    14.1241\\
25&     1.4325&     1.2624\\
26&     1.3103&     0.9567\\
  &           &           \\
\bottomrule
\end{tabular}
\end{table}

\begin{figure}[H]
\center
\includegraphics[width=8cm]{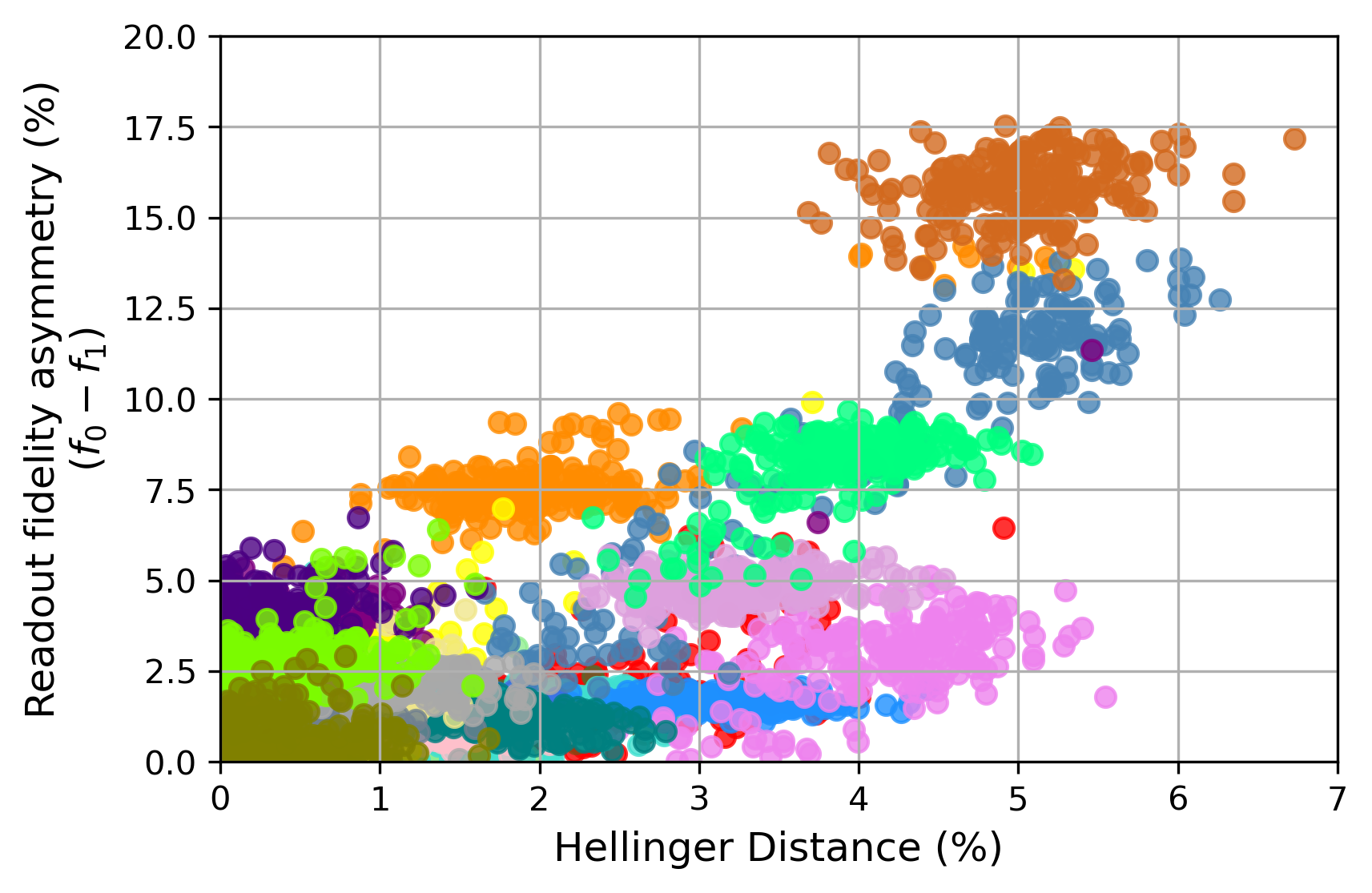}%
\caption{Plot of readout asymmetry vs observed Hellinger distance for 
$L=203$ experiments. Each register element is shaded by a different color.}
\label{fig:scatter_data}
\end{figure}
\end{document}